\newtheorem{thm}{Theorem}
\newtheorem{definition}{Definition}[section]
\newtheorem{lemma}[thm]{Lemma}
\newtheorem{theorem}[thm]{Theorem}
  \date{}
\title{Super Solutions of the Model RB}
\author{Guangyan Zhou$^{1}$,\quad Wei Xu $^{2,*}$\\
\footnotesize $^{1}$Department of Mathematics,
Beijing Technology and Business University, Beijing, 100048, China
\\
\footnotesize $^2$ School of Mathematics and Physics, University of Science and Technology Beijing, Beijing 100083, China\\}
\begin{document}
\maketitle

\begin{abstract}
The concept of super solution is a special type of generalized solutions with certain degree of robustness and stability. In this paper we consider the $(1,1)$-super solutions of the model RB. Using the first moment method, we establish a ``threshold'' such that as the constraint density crosses this value, the expected number of $(1,1)$-super solutions goes from $0$ to infinity.

 {\bf Keywords:} Super solution, Model RB, Phase transitions, Constraint satisfaction problems.
\end{abstract}
\footnotetext[1]{ Corresponding  author. xuwei@ustb.edu.cn.
This work was supported by National Natural Science Foundation of China (61702019, 11801028). }
\section{Introduction}

In many combinatorial optimization and decision problems, what people concern is to find solutions of minimal costs. In practice, however, such optimal solutions can be very brittle in that if the value of one variable becomes unavailable, repairing this solution leads to a great increase in its final cost. Therefore, the concept of super solution is introduced to formalize a solution with a certain degree of robustness or stability. To quantify the robustness, $(a,b)$-super solution was introduced to constraint programming in \cite{H}. Specifically, an $(a,b)$-super solution is one in which if the values assigned to $a$ variables are no longer available, the solution can be repaired by assigning these variables with $a$ new values and at most $b$ other variables.

Over the past years, random models of constraint satisfaction problems (CSPs) have been intensively studied. Initially, four ``standard" models known as models A, B, C and D \cite{SD,GMPS} have been introduced to generate random binary CSP instances. However, Achlioptas et. al. \cite{AKK} have proved that random instances of these models suffer from unsatisfiability as the number of variables increases. Overcoming the trivial unsatisfiability, Xu and Li proposed Model RB \cite{xu}, which is a prototype CSP model with growing domains revised from the standard CSP model B \cite{SD}.
It has been shown that random instances of model RB are hard to solve at the threshold both theoretically \cite{xu} and experimentally \cite{xu06}, thus benchmarks based on model RB have been widely used in algorithm competitions.  In addition, studies on the statistical mechanics of model RB  show that the replica symmetry solution is always stable in the satisfiable phase \cite{zhao12,zhao11}, and there is no condensation transition in model RB \cite{xuwei}.

An instance  $\mathbf{I}$ of Model RB contains a finite set $V=\{x_1,...,x_n\}$ of $n$ variables and a set of constraints $\mathcal{C}=\{C_1,...,C_m\}$ ($m=rn\ln n$ and $r>0$ is a constant). Each $x_i$ takes values from domain $D=\{1,2,...,d\}$, where $d=n^\alpha$ ($\alpha>0$ is a constant). Each constraint $C_i$ is consisted of a set of $k$ variables $X_i=\{x_{i_1},x_{i_2},...,x_{i_k}\}$, where the $k$ ($k\ge2$) different variables are chosen uniformly at random, and a set of satisfying tuples of values $R_i\subset D^k$ with $|R_i|=pd^k$ ($0<p<1$ measures the tightness of a constraint). Constraint $C_i$ is satisfied if the tuple of values assigned to $X_i$ is in the corresponding relation $R_i$. Solving the instance is to find an assignment of $V$ that satisfies all the constraints or to prove that no such assignment exists.

In \cite{ZG}, Zhang and Gao studied the $(1,0)$-super solutions of model RB where they consider the special case $k=2$ and $d=\sqrt n$. By using the first moment method, they derived an upper bound of the threshold of having a $(1,0)$-super solution asymptotically with probability 1, and established a condition for the expected number of super solutions to grow exponentially. $(1,0)$-super solutions have also been studied in random (3+p)-SAT \cite{wangzhou19}.

 Letting $\triangle(\sigma,\tau)$ denote the set of variables being assigned different values by $\sigma$ and $\tau$, i.e., $\triangle(\sigma,\tau)=\{x_i|\sigma(x_i)\ne\tau(x_i)\}$, we can give the following definitions of $(1,0)$ and $(1,1)$-super solutions.

\begin{definition}\label{def1}(\textbf{$(1,0)$-super solution}).
 An assignment $\sigma\in D^n$ is called a $(1,0)$-super solution if $\sigma$ is a satisfying assignment, and for every variable $x_i$ ($1\le i\le n$), there exists another satisfying assignment $\tau$ such that $\triangle(\sigma,\tau)=\{x_i\}$.
\begin{definition}\label{def2}(\textbf{$(1,1)$-super solution}).
An assignment $\sigma\in D^n$ is called a $(1,1)$-super solution if $\sigma$ is a satisfying assignment, and for every variable $x_i$ ($1\le i\le n$), there exists another satisfying assignment $\tau$ such that either $\triangle(\sigma,\tau)=\{x_i\}$, or there exists $j\ne i$ such that $\triangle(\sigma,\tau)=\{x_i,x_j\}$.
\end{definition}
\end{definition}

In this paper, we consider the $(1,1)$-super solution of model RB, and by using the first moment method, we prove rigorously the following results.

\begin{theorem}\label{11}
Let $Y$ be the number of $(1,1)$-super solutions of the model RB. Then
  \begin{equation*}\label{eq:g,e}
  \mathbf{E}[Y]=\left\{
 \begin{aligned}
 &0, \text{\quad\quad if }r>-\frac{\alpha}{\ln p},\\
&+\infty, \text{ if }r<-\frac{\alpha}{\ln p}.
 \end{aligned}
 \right.
 \end{equation*}
\end{theorem}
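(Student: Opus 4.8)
The plan is to apply the first moment method directly to $Y$. By linearity of expectation and the invariance of Model RB under permuting the values of each domain, every assignment $\sigma_0\in D^n$ is equally likely to be a $(1,1)$-super solution, so $\mathbf{E}[Y]=d^n\,\Pr[\sigma_0\text{ is a }(1,1)\text{-super solution}]$. I would factor this probability as $\Pr[\sigma_0\text{ satisfies }\mathbf{I}]\cdot\Pr[\sigma_0\text{ is repairable}\mid\sigma_0\text{ satisfies }\mathbf{I}]$. Since the $m$ constraints are generated independently and a fixed $k$-tuple lies in a random $R_i$ with probability $|R_i|/d^k=p$, the first factor equals $p^m$, giving $\mathbf{E}[Y]=d^np^m\,q$ with $q:=\Pr[\text{repairable}\mid\text{satisfying}]\in(0,1]$. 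As $d^np^m=\exp\{(\alpha+r\ln p)\,n\ln n\}$ already tends to $+\infty$ for $r<-\alpha/\ln p$ and to $0$ for $r>-\alpha/\ln p$, the whole theorem reduces to showing that the repair factor $q$ does not move this threshold.

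The case $r>-\alpha/\ln p$ is then immediate: a $(1,1)$-super solution is in particular a satisfying assignment, so $q\le 1$ and $\mathbf{E}[Y]\le d^np^m\to 0$. All the work is in the lower bound for $r<-\alpha/\ln p$, where I must prove $\log q=o(n\ln n)$ so that the exponential growth of $d^np^m$ is preserved. I would approach this through the complementary event with a union bound over variables, $1-q\le\sum_{i=1}^n\Pr[x_i\text{ is not repairable}\mid\text{satisfying}]$, and then estimate, for a single variable, the expected number of admissible repairs.

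Conditioned on $\sigma_0$ being satisfying, moving $x_i$ to a new value keeps a given constraint $C\ni x_i$ satisfied with probability $(pd^k-1)/(d^k-1)\sim p$, independently over the $d_i$ constraints through $x_i$; hence a fixed $(1,0)$-repair stays satisfying with probability $\sim p^{d_i}$, and a fixed $(1,1)$-repair that also moves a partner $x_j$ with probability $\sim p^{d_i+d_j-d_{ij}}$, where $d_i,d_j$ are degrees and $d_{ij}$ the co-degree in the random constraint hypergraph. Summing over the $\sim d$ replacement values for $x_i$ and the $\sim dn$ partner–value pairs, the expected number of repairs of $x_i$ is of order $d\,p^{d_i}+d^2\sum_{j\ne i}p^{d_i+d_j-d_{ij}}$. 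The decisive feature is that the degrees are asymptotically Poisson with mean $kr\ln n$, so the relevant averages are governed by $\mathbf{E}[p^{D}]=e^{-kr\ln n\,(1-p)}=n^{-kr(1-p)}$ rather than by $p^{\,kr\ln n}$; the abundance of low-degree variables is what I would use to make the $(1,1)$ term, with its extra factor $n$, tend to infinity for $r<-\alpha/\ln p$. This would force $\Pr[x_i\text{ not repairable}]\to 0$ fast enough that the union bound gives $q\to 1$, hence $\mathbf{E}[Y]=d^np^m\,q\to+\infty$.

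The main obstacle is exactly this per-variable estimate, made uniform over all $n$ variables. Unlike the satisfying event, repairability is not independent across variables, and the number of admissible repairs of a fixed variable need not concentrate near its mean close to the threshold, so a crude Markov bound (which only upper bounds the existence probability) is useless for the lower direction. The heart of the argument is therefore to show that $\Pr[x_i\text{ not repairable}\mid\text{satisfying}]$ is not merely $o(1)$ but genuinely small — in particular not polynomially small, which would shift the threshold downward — and this is where the flexibility of $(1,1)$-repairs (the freedom to choose the partner $x_j$ among $\Theta(n)$ variables, together with the Poisson degree fluctuations that supply many easily repaired, low-degree variables) must be exploited quantitatively. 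Once this bound is in place, the union bound closes the lower bound and, with the trivial upper bound, yields the claimed dichotomy.
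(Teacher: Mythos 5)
Your overall skeleton coincides with the paper's: write $\mathbf{E}[Y]=d^np^m\,q$ with $q=\mathbf{P}(\text{repairable}\mid\text{satisfying})$, dispose of the subcritical case $r>-\alpha/\ln p$ via $q\le 1$ (which is in fact cleaner than the paper's route, since the paper needlessly invokes the two-term Bonferroni upper bound and Lemma \ref{intersection} there), and attack the supercritical case by the union bound $1-q\le\sum_{i=1}^n\mathbf{P}(x_i\text{ not repairable}\mid\cdot)$, which is exactly the paper's truncated inclusion--exclusion lower bound in (\ref{exclusion}). However, the step you yourself flag as ``the main obstacle'' --- a uniform per-variable bound on $\mathbf{P}(x_i\text{ not repairable})$ --- \emph{is} the entire content of the supercritical direction, and you leave it unproved. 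Worse, the route you sketch for it points the wrong way. The event $\bigcap_iR_i(\sigma)$ requires \emph{every} variable, including the highest-degree ones, to be repairable, so ``the abundance of easily repaired low-degree variables'' supplied by Poisson fluctuations cannot help: the union bound needs each term to be roughly $o(1/n)$, and the worst variables dominate. Moreover, a large expected number of admissible repairs does not lower-bound the probability that at least one exists (as you concede), and your per-repair success probability $p^{d_i}\approx n^{kr\ln p}$ is only polynomially small, which is precisely why your heuristic drifts toward the wrong threshold $-\alpha/(k\ln p)$ and strands you on an unresolved concentration problem.

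The paper closes this gap by a different and much stronger per-constraint estimate that exploits the growing domain rather than degree fluctuations. Fix a variable $x_i$, a replacement value $y\in D_i$, and a constraint $C\in\mathcal{C}(x_i)$. A $(1,1)$-repair with $\tau(x_i)=y$ may additionally change any one of the other $k-1$ variables of $C$ to any of $d-1$ values, so there are $1+(d-1)(k-1)=\Theta(n^\alpha)$ candidate tuples inside $C$, each failing to lie in the relation with probability $q=1-p$; hence the probability that $C$ admits \emph{no} repair consistent with $\tau(x_i)=y$ is $q^{1+(d-1)(k-1)}=e^{-\Theta(n^\alpha)}$ (Lemma \ref{sigma1}), super-polynomially small because $d=n^\alpha\to\infty$. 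Multiplying over the at most $rn\ln n$ constraints through $x_i$ and the $d-1$ choices of $y$ only costs polynomial-in-$n^\alpha$ factors, so $\mathbf{P}(\overline{R_i(\sigma)}\mid\cdot)=(1-\rho^{m_i})^{d-1}\le\big(m_iq^{1+(d-1)(k-1)}\big)^{d-1}$ is of order $e^{-\Theta(n^{2\alpha})}$, and the union bound over $n$ variables closes with no concentration, Poissonization, or second-moment input whatsoever. Your proposal is missing precisely this observation; without it the supercritical half of the theorem is not established.
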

From the definitions, we can see that $(1,0)$ and $(1,1)$-super solutions reflect certain characteristics of structures of solution space, since they can be viewed as a particular type of ``locally maximum'' subset of satisfying solutions. In addition, the set of $(1,0)$-super solutions is a subset of $(1,1)$-super solutions, and the set of $(1,1)$-super solutions is a subset of standard solutions. This explains why the upper bound on $r$ of having $(1,0)$-super solutions in Theorem 5 in \cite{ZG} is smaller than that of having $(1,1)$-super solutions in Theorem \ref{11} and standard solutions obtained in \cite{xu}.

Our results are interesting in several aspects. First, there seems to be a sharp phase transition for the $(1,1)$-super solution where the expected number of $(1,1)$-super solutions goes from $0$ to infinity, and this threshold is exactly the same with the standard satisfiability threshold of model RB in \cite{xu}. Second, the restrictions of $r$ in Theorem \ref{11} is independent of the constraint size $k$. In addition, from the Definitions \ref{def1} and \ref{def2}, super solutions can be viewed as a certain type of ``locally maximum" subset of the standard solutions, thus we believe our results reflect from another aspect the structure of solution space and stability of solutions. It is worth mentioning that whether $-\frac{\alpha}{\ln p}$ in Theorem \ref{11} is a sharp threshold around which the model RB goes from being a.s. (almost surely) $(1,1)$-satisfiable to a.s. $(1,1)$-unsatisfiable is still open, and the second moment method will be needed to solve this problem.

\section{Proof of Theorem \ref{11}}
Let $\sigma$ be  a fixed assignment, $\mathbf{I}$ be a random instance of model RB, and $\sigma\models \mathbf{I}$ be that $\sigma$ satisfies $\mathbf{I}$. We consider the following events which was similarly used in \cite{ZG}:
\begin{itemize}
  \item $S(\sigma)$: $\sigma$ is a solution for $\mathbf{I}$.
  \vspace{-5pt}
  \item $R_i(\sigma)$: there exists another solution $\tau$ for $\mathbf{I}$ such that $\triangle(\sigma,\tau)=\{x_i\}$, or there exists a $j\ne i$ such that $\triangle(\sigma,\tau)=\{x_i,x_j\}$.
       \vspace{-5pt}
  \item $W(\sigma)$: $\sigma$ is a $(1,1)$-super solution for $\mathbf{I}$.
\end{itemize}

It is straightforward to see that
\begin{eqnarray*}
\mathbf{P}(W(\sigma))=\mathbf{P}\left(S(\sigma)\bigcap_{1\leq i\leq n}R_i(\sigma)\right).
\end{eqnarray*}

Following the notations  in \cite{ZG}, let $\mathcal{M}\subset (\underbrace{X\times ...\times X}_k)^m$ be the collection of all possible multi-sets of $m$ unordered constraints ($|\mathcal{M}|=\binom {n} {k}^{m}$). For a given set $e\in \mathcal{M}$ of $m$ unordered tuples, denote by $E(e)$ the event that $e$ is selected as the set of constraints of the random instance. By the definition of the random model, the set of events $\{E(e),e\in \mathcal{M}\}$ are mutually disjoint. Also, for any $e\in \mathcal{M}$, $\mathbf{P}(E(e))=\binom {n} {k}^{-m}$ and $\mathbf{P}(S(\sigma)|E(e))=p^m$. The probability of $\sigma$ being a $(1,1)$-super solution works out to be

\begin{align}\label{Tsigma}
\nonumber\mathbf{P}(W(\sigma))&=\mathbf{P}\left(S(\sigma)\cap_{1\leq i\leq n}R_i(\sigma)\right)\\
\nonumber&=\mathbf{P}\left(\cup_{e\in \mathcal{M}}\left(E(e)\cap S(\sigma)\cap(\cap_{1\le i\le n}R_i(\sigma))\right)\right)\\
\nonumber&=\sum_{e\in \mathcal{M}}\mathbf{P}\left(E(e)\cap S(\sigma)\cap(\cap_{1\le i\le n}R_i(\sigma))\right)\\
\nonumber&=\sum_{e\in \mathcal{M}}\mathbf{P}(E(e))\mathbf{P}(S(\sigma)|E(e)\mathbf{P}\left(\cap_{1\le i\le n}R_i(\sigma)|S(\sigma)\cap E(e)\right)\\
&=\binom {n} {k}^{-m}p^m\sum_{e\in \mathcal{M}}\mathbf{P}\left(\cap_{1\le i\le n}R_i(\sigma)|S(\sigma)\cap E(e)\right).
\end{align}

By inclusion$/$exclusion, we have
\begin{eqnarray}\label{exclusion}
\nonumber&&1-\sum_{i=1}^n\mathbf{P}(\overline{R_i(\sigma)}|S(\sigma)\cap E(e))\\
&\le&\mathbf{P}\left(\cap_{1\le i\le n}R_i(\sigma)|S(\sigma)\cap E(e)\right)
\\ \nonumber&\le&1-\sum_{i=1}^n\mathbf{P}(\overline{R_i(\sigma)}|S(\sigma)\cap E(e))+\sum_{1\le i<j\le n}\mathbf{P}(\overline{R_i(\sigma)}\cap\overline{R_j(\sigma)}|S(\sigma)\cap E(e)).
\end{eqnarray}

In the following, assume that $\sigma$ is a solution of $\mathbf{I}$, and $\mathcal{C}(x_i)$ is the set of constraints containing the variable $x_i$, $1\le i\le n$. Let  $D_i=D\backslash \{\sigma(x_i)\}$, and $q=1-p$ be the probability that a random constraint can not be satisfied by an assignment. 

\begin{lemma}\label{sigma1}
Let $\sigma$ be a given solution of model RB. For any fixed $y\in D_i$, suppose an assignment $\tau$ assigns $x_i$ the value $y$ and  $|\triangle(\sigma,\tau)|=1$ or $2$. For any constraint $C\in\mathcal{C}(x_i)$,
\begin{align*}
\mathbf{P}[C\text{ is satisfied by }\tau]=\rho\equiv1-q^{1+(d-1)(k-1)}.
\end{align*}
\end{lemma}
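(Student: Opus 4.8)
The plan is to reduce the claim to a simple enumeration on the single constraint $C$, since once we fix $C$ the only randomness that matters is which tuples its relation $R_C$ contains. First I would relabel the variables of $C$ as $X_C=\{x_i,x_{i_2},\ldots,x_{i_k}\}$ so that $x_i$ sits in one distinguished coordinate. We work under the hypothesis that $\sigma$ is a solution, so $\sigma|_{X_C}\in R_C$. Because every admissible repair $\tau$ sets $\tau(x_i)=y$ with $y\ne\sigma(x_i)$, its projection $\tau|_{X_C}$ always carries $y$ in the $x_i$-coordinate, and hence always differs from $\sigma|_{X_C}$ there; this will be what decouples the candidate tuples from the tuple we have conditioned to lie in $R_C$.

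Next I would enumerate the tuples that $\tau|_{X_C}$ can equal. Since $|\triangle(\sigma,\tau)|\le 2$ and $x_i$ already accounts for one difference, $\tau$ may differ from $\sigma$ in at most one further variable, and such a change is felt by $C$ only when it lands on one of the $k-1$ variables $x_{i_2},\ldots,x_{i_k}$. Consequently $\tau|_{X_C}$ is either the base tuple $t_0=(y,\sigma(x_{i_2}),\ldots,\sigma(x_{i_k}))$ or one of the tuples obtained from $t_0$ by resetting a single other coordinate to one of its $d-1$ off-$\sigma$ values, giving $1+(d-1)(k-1)$ distinct tuples, none equal to $\sigma|_{X_C}$. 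The constraint $C$ can be repaired precisely when at least one of these tuples lies in $R_C$, so the complementary event is that all $1+(d-1)(k-1)$ of them are excluded. Using $\mathbf{P}[\text{a fixed tuple}\in R_C]=p$, i.e. exclusion probability $q$, and multiplying over the candidates then yields $q^{1+(d-1)(k-1)}$, so that the repair probability is $\rho=1-q^{1+(d-1)(k-1)}$.

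The hard part will be justifying the product $q^{1+(d-1)(k-1)}$ rather than the enumeration itself. Under the exact Model RB definition $R_C$ is a uniform subset of $D^k$ of fixed size $pd^k$, so exclusions of distinct tuples are negatively correlated and the conditioning on $\sigma|_{X_C}\in R_C$ shifts the per-tuple exclusion probability slightly off $q$. I would address this by invoking the independent-inclusion convention standard in the first-moment analysis of Model RB, under which each tuple enters $R_C$ independently with probability $p$ and the identity holds exactly; for the fixed-size relation the same identity is recovered asymptotically as $d=n^{\alpha}\to\infty$, which is the only regime the theorem concerns.
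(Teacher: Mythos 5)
Your proof follows essentially the same route as the paper's: enumerate the $1+(d-1)(k-1)$ candidate tuples that a repair $\tau$ can induce on $C$ (the base tuple with $\tau(x_i)=y$ plus the $(d-1)(k-1)$ single-coordinate variations), observe that $C$ fails to be repairable exactly when all of them are excluded from $R_C$, and multiply the per-tuple exclusion probability $q$ to get $\mathbf{P}[\tau\nvDash C]=q^{1+(d-1)(k-1)}$. The only difference is that you explicitly flag the independence-of-inclusion issue and the effect of conditioning on $\sigma|_{X_C}\in R_C$, which the paper silently assumes away; that caveat is well taken but does not alter the argument.
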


\begin{proof}

Note that $\sigma$ is a satisfying solution of $\mathbf{I}$, $|\triangle(\sigma,\tau)|=1$ or $2$, $\tau(x_i)=y\ne\sigma(x_i)$, thus there may exists another variable being assigned different values by $\sigma$ and $\tau$.

If $|\{x\in C|\sigma(x)\ne\tau(x)\}|=1$ (indeed, it is $x_i$), there is exactly $1$ way for $\tau$ to assign variables in $C$; if $|\{x\in C|\sigma(x)\ne\tau(x)\}|=2$, there are $(d-1)(k-1)$ ways for $\tau$ to assign variables in $C$. Since each way falsifies $C$ with probability $q$, thus $\mathbf{P}[\tau\nvDash C]=q^{1+(d-1)(k-1)}$, and
$\mathbf{P}[\tau\models C]=1-q^{1+(d-1)(k-1)}=\rho.$
\end{proof}

From Lemma \ref{sigma1}, let $\tau$ be an assignment such that $\triangle(\sigma,\tau)=\{x_i\}$ or $\{x_i,x_j\}$ for some $j\ne i$, and $\tau(x_i)=y$ ($y\in D_i$), we see that

\begin{align}\label{exleft}
\nonumber\mathbf{P}(\overline{R_i(\sigma)}|S(\sigma)\cap E(e))
&=\mathbf{P}(\cap_{y\in D_i}\overline{S(\tau)}|S(\sigma)\cap E(e))\\
\nonumber&=\prod_{y\in D_i}\mathbf{P}(\overline{S(\tau)}|S(\sigma)\cap E(e))\\
\nonumber&=\left[1-\mathbf{P}(\text{Constraints in $\mathcal{C}(x_i)$ are all satisfied by }\tau)\right]^{d-1}\\
&=\left(1-\rho^{m_i}\right)^{d-1}.
\end{align}

\begin{lemma}\label{intersection}
Suppose $y\in D\backslash \{\sigma(x_i)\}$, $z\in D\backslash \{\sigma(x_j)\}$. If $\tau, \omega$ be assignments that $\tau(x_i)=y$ and  $|\triangle(\sigma,\tau)|=1$ or $2$; $\omega(x_j)=z$ and  $|\triangle(\sigma,\omega)|=1$ or $2$. Then for any  $C\in\mathcal{C}(x_i)\cap\mathcal{C}(x_j)$,
\begin{align*}
\mathbf{P}[\tau,\omega\models C]
=1-2q^{1+(d-1)(k-1)}+q^{1+2(d-1)(k-1)}.
\end{align*}
\end{lemma}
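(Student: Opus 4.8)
The plan is to run the same one-constraint computation as in Lemma \ref{sigma1}, but for two repair attempts at once on a constraint $C$ that contains both $x_i$ and $x_j$. Reading $\mathbf{P}[\tau,\omega\models C]$ as the probability that $C$ can be repaired by fixing $x_i=y$ and, separately, by fixing $x_j=z$, I would first pass to complements by inclusion/exclusion,
\[
\mathbf{P}[\tau,\omega\models C]=1-\mathbf{P}[\tau\nvDash C]-\mathbf{P}[\omega\nvDash C]+\mathbf{P}[\tau\nvDash C,\,\omega\nvDash C].
\]
Lemma \ref{sigma1} already gives $\mathbf{P}[\tau\nvDash C]=\mathbf{P}[\omega\nvDash C]=q^{1+(d-1)(k-1)}$, so the whole statement reduces to the joint term $\mathbf{P}[\tau\nvDash C,\,\omega\nvDash C]$, which the target identity forces to be $q^{1+2(d-1)(k-1)}$.

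To evaluate that joint term I would list, exactly as in Lemma \ref{sigma1}, the tuples on $C$ through which each repair could satisfy $C$. For the repair fixing $x_i=y$ these are the single tuple agreeing with $\sigma$ off $x_i$ together with the $(d-1)(k-1)$ tuples that set $x_i=y$ and change one further coordinate of $C$; symmetrically for the repair fixing $x_j=z$. A repair fails on $C$ exactly when all of its candidate tuples lie outside the relation of $C$, and since each tuple is excluded independently with probability $q$, the event $\{\tau\nvDash C,\,\omega\nvDash C\}$ requires every tuple in the union of the two candidate families to be excluded. Hence $\mathbf{P}[\tau\nvDash C,\,\omega\nvDash C]=q^{N}$, where $N$ is the number of distinct tuples in this union.

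The heart of the argument is therefore the count of $N$ by inclusion/exclusion of the two families, each of size $1+(d-1)(k-1)$. The step requiring care is the overlap: I would argue that the two families share exactly one tuple. The two ``anchor'' tuples (equal to $\sigma$ off $x_i$, respectively off $x_j$) cannot lie in the opposite family, since the first keeps $x_j=\sigma(x_j)\ne z$ and the second keeps $x_i=\sigma(x_i)\ne y$; and any common tuple must carry both $x_i=y$ and $x_j=z$ while differing from $\sigma$ in only one extra coordinate for each repair, which determines it uniquely as the tuple that agrees with $\sigma$ except at $x_i,x_j$. This yields $N=2\bigl(1+(d-1)(k-1)\bigr)-1=1+2(d-1)(k-1)$, hence $\mathbf{P}[\tau\nvDash C,\,\omega\nvDash C]=q^{1+2(d-1)(k-1)}$, and substituting into the inclusion/exclusion display gives the claimed value. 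The main obstacle is precisely this overlap count: one must verify both that the single shared tuple really belongs to each family and that no further coincidences occur, because any miscount here would land directly in the exponent of $q$.
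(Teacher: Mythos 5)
Your proposal is correct and follows essentially the same route as the paper: inclusion/exclusion over the two failure events, with the key point being that the two candidate tuple families share exactly one tuple (the one setting $x_i=y$, $x_j=z$ and agreeing with $\sigma$ elsewhere), so the joint failure probability is $q^{2(1+(d-1)(k-1))-1}=q^{1+2(d-1)(k-1)}$. If anything, you are more explicit than the paper in verifying that the overlap is exactly one tuple, which is the step the paper merely asserts.
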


\begin{proof}
From the proof of Lemma \ref{sigma1}, there are $1+(d-1)(k-1)$ possible ways for $\tau$ to assign variables in $C$, and  let $A(\tau)$ be the event that all these ways fail to satisfy $C$; similarly there are $1+(d-1)(k-1)$ possible ways for $\omega$ to assign variables in $C$, and let $B(\omega)$ be the event that all these ways fail to satisfy $C$. It is easy to see that, there is exactly one common assignment for $\tau,\omega$, where $\tau(x_i)=\omega(x_i)=y$ and $\tau(x_j)=\omega(x_j)=z$.
\begin{align*}
&\mathbf{P}[\tau,\omega\models C]
=\mathbf{P}[C\text{ is satisfied by both }\tau\text{ and }\omega]\\
=&\mathbf{P}(\overline{A(\tau)}\cap\overline{B(\omega)})\\
=&1-\mathbf{P}(A(\tau))-\mathbf{P}(B(\omega))+\mathbf{P}(A(\tau)\cap B(\omega))\\
=&1-\mathbf{P}(A(\tau))-\mathbf{P}(B(\omega))+\mathbf{P}(A(\tau))\mathbf{P}( B(\omega)|A(\tau))\\
=&1-2q^{1+(d-1)(k-1)}+q^{1+2(d-1)(k-1)}.
\end{align*}
\end{proof}

From Lemma \ref{intersection}, we see that
\begin{align*}
&\mathbf{P}(\overline{R_i(\sigma)}\cap\overline{R_j(\sigma)}|S(\sigma)\cap E(e))\\
=&\mathbf{P}\left(\cap_{y\in D_i}\overline{S(\tau)}\cap_{z\in D_j}\overline{S(\omega)}|S(\sigma)\cap E(e)\right)\\
=&\left[\mathbf{P}\left(\overline{S(\tau)}\cap\overline{S(\omega)}|S(\sigma)\cap E(e)\right)\right]^{(d-1)^2},
\end{align*}
where the last equation holds because of independency for any pair of values $(y,z)$.

Let $l_{ij}=|\mathcal{C}(x_i)\cap\mathcal{C}(x_j)|$, then 

\begin{align*}
&\mathbf{P}\left(S(\tau)\cup S(\omega)|S(\sigma)\cap E(e)\right)\\
=&\mathbf{P}\left(\text{$\tau$ satisfies all constraints in $\mathcal{C}(x_i)$, or $\omega$ satisfies all constraints in $\mathcal{C}(x_j)$}\right)\\
=&\mathbf{P}(\tau\models\mathcal{C}(x_i))+\mathbf{P}(\omega\models\mathcal{C}(x_j))-\mathbf{P}(\tau\models\mathcal{C}(x_i),\omega\models\mathcal{C}(x_j))\\
=&\rho^{m_i}+\rho^{m_j}-\rho^{m_i+m_j-2l_{ij}}
\left(1-2q^{1+(d-1)(k-1)}+q^{1+2(d-1)(k-1)}\right)^{l_{ij}}\\
\ge&\rho^{\min\{m_i,m_j\}}.\\ 
\end{align*}
Hence
\begin{align}\label{exright}
&\mathbf{P}\left(\overline{S(\tau)}\cap \overline{S(\omega)}|S(\sigma)\cap E(e)\right)\le1-\rho^{\min\{m_i,m_j\}}.
\end{align}

Combing (\ref{exclusion}), (\ref{exleft}) and (\ref{exright}) we have
\begin{eqnarray}\label{lowerupper}
\nonumber&&1-\sum_{i=1}^n\big(1-\rho^{m_i}\big)^{d-1}\\
&\le&\mathbf{P}\left(\cap_{1\le i\le n}R_i(\sigma)|S(\sigma)\cap E(e)\right)\\
&\le&1-\sum_{i=1}^n\big(1-\rho^{m_i}\big)^{d-1}+
\nonumber\sum_{1\le i<j\le n}\left(1-\rho^{\min\{m_i,m_j\}}\right)^{(d-1)^2}.
\end{eqnarray}

\subsection{Supercritical area: $r<\frac{\alpha}{-\ln p}$}
Let $Y$ be the total number of $(1,1)$-super solutions. By (\ref{Tsigma}) and (\ref{lowerupper}),
\begin{align*}
\mathbf{E}[Y]\ge d^n\binom n k^{-m}p^m\sum_{e\in \mathcal{M}}\left(1-\sum_{i=1}^n\big(1-\rho^{m_i}\big)^{d-1}\right).
\end{align*}

Since $m_iq^{1+(d-1)(k-1)}=o(1)$ as $n$ tends to $\infty$, thus
$$\rho^{m_i}=\left(1-q^{1+(d-1)(k-1)}\right)^{m_i}\ge1-m_iq^{1+(d-1)(k-1)}.$$
Note that $q<1$ and $d=n^{\alpha}$ with $\alpha>0$, thus
\begin{align*}
&1-\sum_{i=1}^n\big(1-\rho^{m_i}\big)^{d-1}\\
\ge &1-q^{(d-1)(1+(d-1)(k-1))}\sum_{i=1}^nm_i^{d-1}\\
\ge&1-q^{d-1+(d-1)^2(k-1)}n(rn\ln n)^{d-1}\\
=&1-\exp\{(d-1)\ln q+(d-1)^2(k-1)\ln q+\ln n+(d-1)\ln(rn\ln n)\}\\
=&1-\exp\{(1+o(1))(d-1)^2(k-1)\ln q\}\\
=&1+o(1).
\end{align*}

As a result,  if $\alpha+r\ln p>0$, then $\mathbf{E}[Y]$ tends to infinity since
\begin{align*}
\ln\mathbf{E}[Y]&\ge n\left\{\ln d+r\ln p\ln n\right\}\\
&=n\ln n(\alpha+r\ln p).
\end{align*}

\subsection{Subcritical area: $r>\frac{\alpha}{-\ln p}$}
Let $Y$ be the total number of $(1,1)$-super solutions. By (\ref{Tsigma}) and (\ref{lowerupper}),
\begin{align*}
&\mathbf{E}[Y]\le d^np^m\mathbf{P}\left(\bigcap_{i=1}^nR_i(\sigma)|S(\sigma)\cap E(e)\right)\\
\le&d^np^m\left[1-\sum_{i=1}^n\big(1-\rho^{m_i}\big)^{d-1}+
\sum_{1\le i<j\le n}\left(1-\rho^{\min\{m_i,m_j\}}\right)^{(d-1)^2}\right]\\
\le &d^np^m\left[1-\sum_{i=1}^n\big(1-\rho^{m_i}\big)^{d-1}\right]+d^np^m\sum_{1\le i<j\le n}\left(1-\rho^{\min\{m_i,m_j\}}\right)^{(d-1)^2}\\
\equiv &E_1+E_2.
\end{align*}

Note that $m_i,m_j\le rn\ln n$, $d=n^\alpha$ and $q<1$, then
\begin{align*}
m_iq^{1+(d-1)(k-1)}\le&\exp\{\ln n+\ln (r\ln n)+n^\alpha(k-1)\ln q+\ln q\}\\
=&\exp\{(1+o(1))n^\alpha(k-1)\ln q\}=o(1).
\end{align*}

Note that $\rho<1$. In addition, if $nx=o(1)$, then $(1-x)^n\ge1-nx$, thus
\begin{align*}
\rho^{\min\{m_i,m_j\}}\ge\rho^{rn\ln n}=(1-q^{1+(d-1)(k-1)})^{rn\ln n}\ge 1-rn\ln nq^{1+(d-1)(k-1)}.
\end{align*}

Thus
\begin{align*}
&\sum_{1\le i<j\le n}\left(1-\rho^{\min\{m_i,m_j\}}\right)^{(d-1)^2}\\
\le&\frac{n^2}2\left(rn\ln nq^{1+(d-1)(k-1)}\right)^{(d-1)^2}\\
=&\frac12\exp\Big\{(1+o(1))(d-1)^3(k-1)\ln q\Big\}=o(1).
\end{align*}
Therefore if $\alpha+kr\ln p<0$, then
\begin{align*}
\mathbf{E}[Y]\le2d^np^m=2\exp\{(\alpha+r\ln p)n\ln n\}=o(1).
\end{align*}

\section{Conclusion}
In this paper, we study the  $(1,1)$-super solutions of model RB, and our results show that there is a ``threshold'' such that as the constraint density crosses this value, the expected number of $(1,1)$-super solutions goes from $0$ to infinity. Interestingly, this threshold is exactly the same with standard satisfiability threshold obtained in \cite{xu}.

However, a rigorous proof of the best possible lower bound on the ``threshold'' of $(1,1)$-super solution should be based on  the second moment method, which is a challenging future work because of the non-negligible correlations among assignments induced by repairing two or more variables at the same time.

\end{document}